\newtheorem{theorem}{Theorem}
\newtheorem{lemma}[theorem]{Lemma}
\newtheorem{definition}{Definition}
\title{\LARGE \bf
Computing Bounds on $L_{\infty}$-induced Norm for Linear Time Invariant Systems Using Homogeneous Lyapunov Functions
}
\author{Hassan Abdelraouf$^{1}$, Gidado-Yisa Immanuel$^{2}$ and Eric Feron$^{3}$
\thanks{*This work was supported by King Abdullah University of Science and Technology (KAUST)}
\thanks{$^{1}$Ph.D. candidate, Mechanical Engineering, KAUST, Thuwal, Saudi Arabia
{\tt\small hassan.abdelraouf@kaust.edu.sa}}%
\thanks{$^{2}$PhD candidate , School of Aerospace
Engineering, Georgia Institute of Technology, Atlanta, 30332, USA
        {\tt\small gidado.immanuel@gatech.edu}}%
\thanks{$^{3}$Professor, Department of Computer, Electrical and Mathematical Sciences and Engineering, KAUST, Thuwal, Saudi Arabia. 
{\tt\small eric.feron@kaust.edu.sa}}%
}
\begin{document}

\maketitle
\thispagestyle{empty}
\pagestyle{empty}

\begin{abstract}
Quadratic Lyapunov function has been widely used in the analysis of linear time invariant (LTI) systems ever since it has shown that the existence of such quadratic Lyapunov function certifies the stability of the LTI system. In this work, the problem of finding upper and lower bounds for the $L_{\infty}$-induced norm of the LTI system is considered. Quadratic Lyapunov functions are used to find the star norm, the best upper on the $L_{\infty}$-induced norm, by bounding the unit peak input reachable sets by inescapable ellipsoids. Instead, a more general class of homogeneous Lyapunov functions is used to get less conservative upper bounds on the $L_{\infty}$-induced norm and better conservative approximations for the reachable sets than those obtained using standard quadratic Lyapunov functions. The homogeneous Lyapunov function for the LTI system is considered to be a quadratic Lyapunov function for a higher-order system obtained by Lifting the LTI system via Kronecker product. Different examples are provided to show the significant improvements on the bounds obtained by using Homogeneous Lyapunov functions. 
\end{abstract}

\section{INTRODUCTION}

For single-input,single-output (SISO) continuous-time invariant systems, the ${L}_\infty$-induced norm, the signal peak-to-peak gain of a transfer function, is the $\ell_1$ norm that can be computed by $\int_{0}^{\infty}|h(t)|\text{dt}$, where $h(t)$ is the impulse response of the system. Shamma \cite{shamma1993nonlinear} shows that there does not exist a closed form expression for the  $\ell_1$ norm of the LTI system. Given the state space model of the system, an approximate value of the  $\ell_1$ norm can be obtained by only simulation. The problem is to find the final time that provides an accurate approximation for the integration value. This fact motivates the search for alternative methods to get reliable upper bounds for the $\ell_1$ norm.  Authors in \cite{abedor1996linear} introduce the \textit{star-norm} to provide a valid upper bound for the $\ell_1$ norm by the construction of  inescapable ellipsoids using quadratic lyapunov functions. But the \textit{star-norm} calculated in \cite{abedor1996linear} represents a very conservative upper bound on the $\ell_1$ norm for stiff systems, such as introduced in \cite{feron1994linear}. This motivated us to introduce new methods for computing more conservative approximations for the $\ell_1$ norm of linear time invariant systems. 

Several methods in the control theory literature are used in computing $\ell_1$ norm including the originating article \cite{vidyasagar1986optimal} in which the author computes the $\ell_1$ norm for both discrete and continuous time systems with some initial results on controller design for disturbance rejection.  In \cite{boyd1987comparison}, a comparison between the $L_{\infty}$ induced norm of a discrete system and its RMS gain (the maximum magnitude of the its frequency response) is made to see how this difference affects $\mathbf{H}_\infty$ optimal controller design. An improved upper and lower bounds for a discrete time system 's $\ell_1$ norm  are introduced in \cite{balakrishnan1992computing} . Based on these bounds, the worst case $L_\infty$ induced norm of the discrete-time systems with diagonal perturbations can be computed accurately. The $\ell_1$ optimal controller design for  discrete-time MIMO systems is  introduced in \cite{dahleh19871}. The controller is designed to make the system internally stable and optimally track a bounded input signal. The results in \cite{dahleh19871} are extended to $\ell_1$ optimal controller design for continuous time systems in \cite{dahleh19872}. The problem of $\ell_1$ optimal controller with full state feedback is considered in \cite{shamma1996optimization} where author showed that instead of using state feedback dynamic linear controller with arbitrary higher order, the $\ell_1$ optimal controller can be nonlinear memoryless static feedback.

In this note, we introduce different techniques in approximating the $\ell_1$ norm of LTI systems. The method used in \cite{abedor1996linear} depends on computing the \textit{star-norm} by using quadratic Lyapunov function to bound the reachable sets. This methods produces good bounds for some systems. But for stiff systems, as we will show later, these bounds become more conservative. Our approach is based on using higher order homogeneous Lyapunov function  in generating the inescapable ellipsoids to reduce the conservatism in the computed upper bound of the $\ell_1$ norm. This higher order Lyapunov function can be considered as a quadratic Lyapunov function for a higher order "lifted" system. The authors of \cite{abate2020lyapunov} show how the lifted systems are generated for linear time varying systems using a recursive algorithm based on the Kronecker product. In addition, they show that the higher order homogeneous Lyapunov function that certifies stability for linear varying  can be considered as a quadratic Lyapunov function for the lifted system to a higher degree. The homogeneous Lyapunov functions are also used to obtain better approximations for some performance metrics for linear time varying systems in \cite{abate2021pointwise}. The authors show that the bounds on system's peak norms  obtained using higher order homogeneous Lyapunov function are more accurate and less conservative than the bounds resulted using quadratic Lyapunov functions. In \cite{abate2021pointwise}, the effect of control inputs is considered in the lifting process. In this work, we extend the work introduced in \cite{abate2020lyapunov} and \cite{abate2021pointwise} and use homogeneous Lyapunov function to compute less conservative upper bounds for the linear time invariant system's $\ell_1$ norm. These bounds are less than the bounds introduced in \cite{abedor1996linear}. 
In this work, the reachable sets for LTI systems using unit peak input are  approximated using higher order homogeneous Lyapunov functions, then the star norm is computed based on the generated ellipsoid to get better approximation for the system's $\ell_1$ norm. Additionally, we introduce another technique to better approximate the integral value $\int_0^\infty |C e^{At} B| \text{dt}$ by calculating the integral to a specific time $T_0$ then approximate an upper bound for the remaining integral $\int_{T_0}^\infty |C e^{At} B| \text{dt}$ by computing the star-norm for a an equivalent new system. The upper bound for the remaining integral is also improved by lifting the new system. We show that it is more powerful that computing the star norm of the original system directly. To illustrate the results, we compute upper bounds on  $\ell_1$ norm  for different types of systems like: systems with high damping, systems with low damping and systems with stiff mass matrix, by using our proposed approaches that gives very accurate approximations compared with the methods introduced in the literature.

\section{Notation}
Denote the sets of non-negative and positive integers by $\mathbb{Z}_{+}$ and $\mathbb{Z}_{++}$ respectively. $\mathbb{R}_{+}$ and $\mathbb{R}_{++}$ denotes the set of non-negative and positive real numbers respectively. The set of positive definite $n \times n$ matrices is denoted by $\mathbb{S}_{++}^{n} \subset \mathbb{R}^{n \times n}$. For $P \in \mathbb{R}^{n \times n}$, $P \succ 0$ means that $P$ is a positive definite matrix and the function $V(x)=x^T P x$ is positive for all non-zero $x \in R^n $. The $n \times n$ identity matrix is denoted by $I_n$. 
\subsection{Kroncker product}
For matrices $A \in \mathbb{R}^{n \times m}$ and $B \in \mathbb{R}^{p \times q}$, the Kroncker product of $A$ and $B$ is denoted by $A \otimes B \in \mathbb{R}^{np \times mq}$ and is given by
\begin{equation}\label{kroncker product}
    A \otimes B:=\left[\begin{array}{ccc}
a_{11} B & \cdots & a_{1m} B \\
\vdots & \ddots & \vdots \\
a_{n1} B & \cdots & a_{nm} B
\end{array}\right]
\end{equation}
where $A$ matrix entries are represented via subscript. We will introduce some properties of the kroncker product defined in \cite{broxson2006kronecker} and \cite{van2000ubiquitous}. For $A \in \mathbb{R}^{n \times m }$, the $d \textsuperscript{th}$ kroncker power $A^{ \otimes d}$ for all $d \in \mathbb{Z}_{++}$ is defined recursively with a base $A^{\otimes 0} =1$ by: 
 \begin{equation}\label{kroncker sum}
 \begin{aligned}
     A^{\otimes 1}&= A \\
     A^{\otimes d}&= A \otimes A^{\otimes (d-1)}, \quad d=2,3, \dots
 \end{aligned}
 \end{equation}
 The following important properties of kroncker product are used in this work: 
 \begin{subequations}\label{kroncker property}
 \begin{align}
      A \otimes (B+C)= A \otimes B + A \otimes C \label{1}\\
      (A \otimes B)(C \otimes D) = (AC) \otimes (BD) \label{2}\\
      (AB)^{\otimes n}= A^{\otimes n} B ^{\otimes n} \label{3}
 \end{align}
 \end{subequations}
 where all matrices $A$,$B$,$C$ and $D$ are with proper dimensions that permit the formation of products $AC$ and $BD$.
 In \cite{broxson2006kronecker},corollary (7) states that property (\ref{2}) can be generalized as 
 \begin{multline}\label{4}
     A_{1} B_{1} \otimes A_{2} B_{2} \otimes \cdots \otimes A_{n} B_{n}\\=\left(A_{1} \otimes A_{2} \otimes \cdots \otimes A_{n}\right)\left(B_{1} \otimes B_{2} \otimes \cdots \otimes B_{n}\right),
 \end{multline}
 such that all matrices $(A_1,A_2,\dots A_n)$ and $(B_1,B_2,\dots B_n)$ are with proper dimensions that allows all multiplications in the right hand side.

The $d \textsuperscript{th}$ kroncker sum of a symmetric matrix $A \in \mathbb{R}^{n \times n}$ is defined by
 \begin{equation}\label{kroncker sum}
     A^{\oplus d}= \sum_{k=1}^d I_n^{\otimes(k-1)} \otimes A \otimes I_n^{(d-k)}.
 \end{equation}
 For example, 
 \begin{equation}\label{kroncker sum example}
 \begin{aligned}
     A^{\oplus 3}&= I_n^{\otimes 0} \otimes A \otimes I_n^{\otimes 2}  +I_n^{\otimes 1} \otimes A \otimes I_n^{\otimes 1} +I_n^{\otimes 2} \otimes A \otimes I_n^{\otimes 0} \\
     &= A\otimes I_n \otimes I_n + I_n \otimes A \otimes I_n + I_n \otimes I_n \otimes A.
 \end{aligned}
 \end{equation}
\subsection{Norms}
For a vector $v \in \mathbb{R}^n$, we define the ${L}_\infty$ norm as 
\begin{equation}\label{L infty norm for a vector}
 \|v\|_{\infty}=\sup _{i}\left|v_{i}\right|
\end{equation}
where $v= [v_1,v_2, \dots, v_n]^{T}$. The function space $\mathcal{L}_\infty$ is defined as the set of functions $f(t)$ for which the $L_{\infty}$ norm: 
\begin{equation}
    \|f\|_{\infty}=\sup _{t \in \mathbb{R}+}\|f(t)\|_{\infty}
\end{equation}
is bounded. The $L_{\infty}$-induced norm of an operator $\mathcal{G}$ which maps $u$ to $y=\mathcal{G}u$ is defined by 
\begin{equation}\label{L infity induced norm}
    \|\mathcal{G}\|_{\infty}=\sup _{\underset{u \neq 0}{u \in \mathcal{L}_{\infty}}} \frac{\|\mathcal{G} u\|_{\infty}}{\left\|u\right\|_{\infty}}
\end{equation}
Given that the operator $\mathcal{G}$ is considered as the following single-input single-output linear time invariant  system
\begin{equation} \label{LTI system}
    \mathcal{G}:\left\{\begin{array}{l}
\dot{x}=A x+B u \\
y=Cx
\end{array}\right.
\end{equation}
where $x \in \mathbb{R}^n$ and ($A \in \mathbb{R}^{n \times n},B \in \mathbb{R}^{n \times 1}$ and $C \in \mathbb{R}^{1 \times n}$ ) are the system's matrices, the $L_{\infty}$-induced norm in (\ref{L infity induced norm}) is the $\ell_1$ norm of the system in (\ref{LTI system}) which can be computed as 
\begin{equation} \label{l1 norm}
    \|\mathcal{G}\|_1 = \int_0^{\infty} |h(t)| \text{dt},
\end{equation}
where $h(t)$ is the impulse response of the system and can be computed as: 
\begin{equation}\label{impulse response}
    h(t)= C e^{At} B
\end{equation}

\section{Dynamical system lifting} 
In this section, we incorporate the results introduced in \cite{abate2020lyapunov} and \cite{abate2021pointwise} on lifting the system dynamics from the original space to a higher order space. Additionally, we consider the control input in the lifting procedure. 
\subsection{Lifting procedure}
First, given the LTI system in (\ref{LTI system}), the state vector $x$ is lifted to $x^{\otimes d}$ where $d$ is the lifting order. Then, the chain rule is applied to obtain the derivative of the lifted state vector $x^{\otimes d}$ with respect to time as 
\begin{equation}\label{lifting process 1}
\begin{aligned}
    \frac{d}{dt} x^{\otimes d} &= \sum_{k=1}^{d} x^{\otimes {(k-1)}} \otimes \dot{x} \otimes x^{\otimes (d-k)} \\
    &= \sum_{(k=1)}^{d} x^{\otimes {k-1}} \otimes (Ax+Bu) \otimes x^{\otimes (d-k)}. \\
\end{aligned}
\end{equation}
By using property (\ref{1}), equation (\ref{lifting process 1}) can be written as 
\begin{equation}\label{lifting process 2}
\begin{aligned}
    \frac{d}{dt} x^{\otimes d} &= \sum_{k=1}^{d} x^{\otimes {(k-1)}} \otimes Ax \otimes x^{\otimes (d-k)}\\
    &+\sum_{k=1}^{d} x^{\otimes {(k-1)}} \otimes Bu \otimes x^{\otimes (d-k)}.
\end{aligned}
\end{equation}
Since $x= I_n x$, property (\ref{3}) can be used. Therefore, (\ref{lifting process 2}) becomes
\begin{equation}\label{lifting process 3}
\begin{aligned}
    \frac{d}{dt} x^{\otimes d} &= \sum_{k=1}^{d} (I_n^{\otimes {(k-1)}} x^{\otimes {(k-1)}}) \otimes Ax \otimes (I_n^{\otimes (d-k)} x^{\otimes (d-k)})\\
    &+\sum_{k=1}^{d} (I_n^{\otimes {(k-1)}} x^{\otimes {(k-1)}}) \otimes Bu \otimes (I_n^{\otimes (d-k)} x^{\otimes (d-k)}),
\end{aligned}
\end{equation}
hence property (\ref{4}) can be applied  in (\ref{lifting process 3}) to be
\begin{equation}\label{lifting process 4}
\begin{aligned}
    \frac{d}{dt} x^{\otimes d} &= \sum_{k=1}^{d} 
    (I_n^{\otimes (k-1)} \otimes A \otimes I_n^{\otimes (d-k)}) x^{\otimes d}\\
    &+\sum_{k=1}^{d} (I_n^{\otimes (k-1)} \otimes B \otimes I_n^{\otimes (d-k)}) u x^{\otimes (d-1)}.
\end{aligned}
\end{equation}
By using (\ref{kroncker sum example}), (\ref{lifting process 4}) can be summarized as
\begin{equation}\label{lifted states}
    \frac{d}{dt} x^{\otimes d} = A^{\oplus d} x^{\otimes d} + B^{\oplus d} ux^{\otimes(d-1)}.
\end{equation}
The output equation in (\ref{LTI system}) is lifted as 
\begin{equation}\label{lifted output}
    y^{\otimes d } = C^{\otimes d } x^{\otimes d}.
\end{equation}
To summarize this section, the LTI system (\ref{LTI system}) can be lifted to a higher order space with a degree $d \in \mathbb{Z}_{++}$ as
\begin{equation}\label{ lifted system}
\begin{aligned}
    \dot{\zeta}&= A^{\oplus d} \zeta + B^{\oplus d} w\\
    \eta &= C^{\otimes d } \zeta,
\end{aligned}
\end{equation}
where $\zeta=x^{\otimes d} \in \mathbb{R}^{n^d}$, is the lifted state vector, $w= ux^{\otimes(d-1)} \in \mathbb{R}^{n^{(d-1)}} $ is the lifted input vector and   $ \eta=y^{\otimes d} \in \mathbb{R}$.  $A^{\oplus d} \in \mathbb{R}^{n^d \times n^d}$, $B^{\oplus d} \in \mathbb{R}^{n^d \times n^{d-1}}$ and $C^{\otimes d} \in \mathbb{R}^{1 \times n^d}$ are the lifted matrices characterizing the lifted system dynamics. 

\section{star norm and inescapable ellipsoids ($d=1$)}\label{star norm section}
The authors  in \cite{feron1994linear} and \cite{abedor1996linear} avoided the complexity on computing the $L_{\infty}$-induced norm of the LTI systems by computing the star norm, an upper bound on the $L_{\infty}$-induced norm ($\ell_1$ norm). The star norm is obtained by approximating the reachable sets with unit peak input with inescapable ellipsoids which are defined as  follows. 

\begin{definition}(Reachable set with unit peak input) \cite{feron1994linear}
is defined as the set of all reachable states from the origin in finite time by unit peak input $\|u\|_{\infty}\leq1$
\begin{equation}\label{reachable set}
   \mathcal{R}_{\mathrm{up}} \triangleq\left\{x(T) : \begin{array}{lr}
\dot{x}=Ax+Bu, \quad x(0)=0 \\
u^{T} u \leq 1, \quad T \geq 0
\end{array}\right\}
\end{equation}
\end{definition}
\begin{definition}(Inescapable set) \cite{abedor1996linear} A set $\mathcal{X}$ is said to be inescapable if (1) the origin ($x(0)\in \mathcal{X}$) and (2) for $x(0) \in \mathcal{X}$ and $u^Tu\leq 1$, $x(t)$ will evolve inside $\mathcal{X}$ for all future time $t \ge0$    
\end{definition}

\begin{theorem} \cite{abedor1996linear}
consider the stable LTI system (\ref{LTI system}). Let $P \in \mathbb{R}^{n \times n}$ be a positive semi-definite. The closed ellipsoid $\{ x: x^T P x \leq 1\}$ is an inescapable if and only if there exist $\alpha \in R_{+}$ such that: 
\begin{equation}\label{LMI inescapable ellipsoid}
    \left[\begin{array}{cc}
A^T P+P A+\alpha P & P B \\
B^T P & -\alpha I
\end{array}\right]\leq 0
\end{equation}
\end{theorem}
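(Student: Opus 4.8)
The plan is to establish the two directions separately, both from the elementary identity that along trajectories of (\ref{LTI system}),
$\frac{d}{dt}(x^{T}Px)=x^{T}(A^{T}P+PA)x+2x^{T}PBu$,
which is a quadratic form in the stacked vector $z=[x^{T}\ u^{T}]^{T}$ with matrix $\left[\begin{smallmatrix}A^{T}P+PA & PB\\ B^{T}P & 0\end{smallmatrix}\right]$. The ``if'' direction is the easy one: assuming (\ref{LMI inescapable ellipsoid}) for some $\alpha\ge0$, pre- and post-multiplying it by $z^{T}$ and $z$ gives $\frac{d}{dt}(x^{T}Px)\le\alpha(u^{T}u-x^{T}Px)$ along trajectories, hence $\frac{d}{dt}(x^{T}Px)\le\alpha(1-x^{T}Px)$ whenever $u(t)^{T}u(t)\le1$. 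Then $t\mapsto e^{\alpha t}\bigl(x(t)^{T}Px(t)-1\bigr)$ has nonpositive derivative, so it never exceeds $x(0)^{T}Px(0)-1\le0$; therefore $x(t)^{T}Px(t)\le1$ for all $t\ge0$, and since $0$ lies in the ellipsoid the set $\{x:x^{T}Px\le1\}$ is inescapable.

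For the converse, suppose $\mathcal{E}=\{x:x^{T}Px\le1\}$ is inescapable and write $g(x,u):=x^{T}(A^{T}P+PA)x+2x^{T}PBu$. First I would extract a pointwise subtangential condition on the boundary: if $g(x_{0},u_{0})>0$ for some $x_{0}$ with $x_{0}^{T}Px_{0}=1$ and some $u_{0}$ with $u_{0}^{T}u_{0}\le1$, then the trajectory starting at $x(0)=x_{0}\in\mathcal{E}$ under the constant admissible input $u\equiv u_{0}$ has $\frac{d}{dt}(x^{T}Px)>0$ at $t=0$ and so leaves $\mathcal{E}$, a contradiction; hence $g(x,u)\le0$ on $\{x^{T}Px=1,\ u^{T}u\le1\}$. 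Because $g$ is homogeneous of degree two in $(x,u)$, this inequality propagates to the full cone $\{(x,u):x^{T}Px\ge u^{T}u\}$: when $x^{T}Px>0$ the pair $(x,u)/\sqrt{x^{T}Px}$ lies in the boundary set, so $g(x,u)=(x^{T}Px)\,g\!\left((x,u)/\sqrt{x^{T}Px}\right)\le0$, while $x^{T}Px=0$ forces $u=0$ and $Px=0$ and hence $g(x,u)=0$.

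It then remains to convert the cone implication ``$x^{T}Px-u^{T}u\ge0\Rightarrow -g(x,u)\ge0$'' into a matrix inequality, which is exactly the lossless S-procedure (S-lemma) for the two quadratic forms $-g$ and $\sigma_{1}(x,u):=x^{T}Px-u^{T}u$. A Slater point exists as soon as $P\neq0$ (any $x$ with $x^{T}Px>0$ paired with $u=0$; and $P=0$ makes both $\mathcal{E}$ and (\ref{LMI inescapable ellipsoid}) trivially valid), so the implication is equivalent to the existence of $\alpha\ge0$ with $-g(x,u)-\alpha\sigma_{1}(x,u)\ge0$ for all $(x,u)$, i.e. $x^{T}(A^{T}P+PA+\alpha P)x+2x^{T}PBu-\alpha u^{T}u\le0$ for all $(x,u)$, which is precisely (\ref{LMI inescapable ellipsoid}).

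I expect the main obstacle to lie entirely in the converse: making rigorous the passage from the trajectory-based definition of inescapability to the pointwise boundary inequality (this uses that constant inputs are admissible and that $V=x^{T}Px$ is differentiable with nonzero gradient on $\{V=1\}$), executing the homogenization so that the two separate constraints genuinely collapse into the single cone $\{x^{T}Px\ge u^{T}u\}$ even when $P$ is only positive semidefinite, and invoking the S-procedure in exactly the lossless single-constraint form, which requires exhibiting the Slater point. The ``if'' direction, by contrast, is a short comparison-lemma computation once the derivative identity is written down.
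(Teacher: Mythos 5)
The paper itself offers no proof of this theorem --- it is quoted verbatim from \cite{abedor1996linear} --- so there is nothing internal to compare against; judged on its own, your argument is correct and is essentially the standard proof of this invariance characterization. The ``if'' direction via the comparison function $e^{\alpha t}(x^{T}Px-1)$ is fine (for merely measurable $u(\cdot)$ the inequality holds a.e.\ along the absolutely continuous map $t\mapsto x(t)^{T}Px(t)$, which is all the argument needs). For the converse, the three steps you isolate are exactly the right ones, and you execute the delicate points correctly: constant admissible inputs suffice to extract the boundary condition $g(x,u)\le 0$ on $\{x^{T}Px=1,\ u^{T}u\le1\}$ from the trajectory definition of inescapability; the homogenization into the single cone constraint $u^{T}u\le x^{T}Px$ is what converts two nonhomogeneous constraints (where the S-procedure would in general be lossy) into one homogeneous constraint where it is lossless, and your treatment of the degenerate rays ($x^{T}Px=0\Rightarrow Px=0$ since $P\succeq0$, hence $g=0$) is precisely what makes that equivalence valid for semidefinite $P$; and the Slater point $(x_{0},0)$ with $x_{0}^{T}Px_{0}>0$, with the trivial case $P=0$ set aside, legitimizes the homogeneous S-lemma and yields exactly the matrix inequality (\ref{LMI inescapable ellipsoid}) with $\alpha\ge0$. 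The only cosmetic remark is that the nonvanishing of the gradient of $V$ on $\{V=1\}$, which you list as a needed ingredient, is not actually used: positivity of $\frac{d}{dt}V$ at $t=0$ together with continuity already forces exit from the ellipsoid.
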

By using Schur complement, and letting $Q=P^{-1}$, The LMI (\ref{LMI inescapable ellipsoid}) becomes $A Q +Q A^T + \alpha Q + BB/\alpha \leq 0 $. So, there exists a unique solution if $(A+\alpha I_n/2)$ is a stable matrix. Therefore, $\alpha \in (0, \kappa)$, where $\kappa=-2 \text{max}(\text{real} (\text{eig}(A)))$. For given $\alpha$, (\ref{LMI inescapable ellipsoid}) can be solved to get the equivalent inescapable ellipsoid. The objective is to minimize the maximum output $\|Cx\|_\infty$ inside the ellipsoid $x^T P x \leq 1$, so the procedure is as follows: 
\begin{enumerate}
    \item For $\alpha$ sweeps from zero to $\kappa$, Solve the following semi-definite program.
\begin{equation}\label{Semidefinite program}
    \begin{aligned}
& \underset{P}{\text{minimize}}
& & C P^{-1} C^T \\
& \text{subject to}
& & P>0 \\
&&& \begin{bmatrix}
    PA + A^T P + \alpha P & P B \\
    B^T P & - \alpha I
    \end{bmatrix} \leq0
\end{aligned}
    \end{equation}
    \item For each $\alpha$, compute the upper bound of the output $Cx$ inside the corresponding inescapable ellipsoid $x^T P x \leq 1$, where $P$ is the solution of (\ref{Semidefinite program}).  
    \begin{equation} \label{N_alpha}
        N_{\alpha} =\underset{x^T P x\leq 1} {\text{sup}} ||Cx|| = ||C P ^{-1} C^T || ^{\frac{1}{2}}
    \end{equation}
    \item Then, the star norm is the lowest of the upper bounds
    \begin{equation}\label{star norm}
        H^{\star} =\underset{\alpha \in (0,k)}{\text{inf}} N_{\alpha}.
    \end{equation}
\end{enumerate}
The star norm is the least conservative upper bound determined by inescapable ellipsoids. 
\section{Star norm and inescapable ellipsoids ($d=2$)}\label{star norm d=2}
In this section, we introduce the main contribution of our work. The LTI system (\ref{LTI system}) is lifted to a higher order space such that $d=2$. From (\ref{ lifted system}), the lifted system is 
\begin{equation}\label{ lifted system d=2}
\begin{aligned}
    \dot{\zeta}&= A^{\oplus 2} \zeta + B^{\oplus 2} w\\
    \eta &= C^{\otimes 2 } \zeta,
\end{aligned}
\end{equation}
such that $\zeta= x^{\otimes 2} \in \mathbb{R}^{n^2}$, $w=ux \in \mathbb{R}^{n}$ and $\eta = y^{\otimes d} \in \mathbb{R}$. The matrices $A \in \mathbb{R}^{n^2 \times n^2}$, $B \in \mathbb{R}^{n^2 \times n}$ and $C \in \mathbb{R}^{1 \times n^2}$ can be written as 
\begin{equation}\label{lifted matrices d =2}
\begin{aligned}
  A^{\oplus 2} &= A \otimes I_n + I_n \otimes A \\ 
  B^{\oplus 2} &= B \otimes I_n + I_n \otimes B \\
  C^{\otimes 2}& = C \otimes C.
\end{aligned}
\end{equation}

\begin{lemma}\label{lemma}
If $A$ is a stable matrix, then $A^{\oplus 2}$ is also a stable matrix.
\end{lemma}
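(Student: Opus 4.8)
The plan is to prove the lemma by exhibiting an explicit quadratic Lyapunov certificate for the lifted matrix $A^{\oplus 2}=A\otimes I_n+I_n\otimes A$, built directly from a Lyapunov certificate for $A$; this keeps the argument in the spirit of the rest of the paper and, unlike a spectral argument, needs no case distinction on the Jordan structure of $A$. Since $A$ is stable, Lyapunov's theorem provides $P\in\mathbb{S}_{++}^{n}$ with $M:=A^{T}P+PA\prec 0$. I claim that $\mathcal P:=P\otimes P$ is a Lyapunov matrix for $A^{\oplus 2}$, that is $\mathcal P\succ 0$ and $(A^{\oplus 2})^{T}\mathcal P+\mathcal P A^{\oplus 2}\prec 0$, which by Lyapunov's theorem (sufficiency direction) yields stability of $A^{\oplus 2}$.

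The steps are: (i) $\mathcal P\succ 0$, since $P\otimes P$ is symmetric and its eigenvalues are the pairwise products of the positive eigenvalues of $P$; (ii) expand the Lyapunov expression using $(X\otimes Y)^{T}=X^{T}\otimes Y^{T}$ together with the distributive law \eqref{1} and the mixed-product law \eqref{2}, which, after applying \eqref{2} termwise so that each $I_n$ multiplies a copy of $P$, gives
\begin{align*}
(A^{\oplus 2})^{T}(P\otimes P)+(P\otimes P)A^{\oplus 2}
&= (A^{T}P+PA)\otimes P+P\otimes(A^{T}P+PA)\\
&= M\otimes P+P\otimes M;
\end{align*}
(iii) observe that $M\otimes P$ and $P\otimes M$ are symmetric with eigenvalues $\mu_i p_j$ (respectively $p_i\mu_j$), where $\mu_i<0$ and $p_j>0$, hence both are negative definite; (iv) conclude that their sum is negative definite. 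The only points requiring care are bookkeeping ones: applying \eqref{2} with the correct pairing, and invoking the standard facts (see \cite{broxson2006kronecker,van2000ubiquitous}) that the eigenvalues of a Kronecker product are the products of the factors' eigenvalues and that a Kronecker product of symmetric matrices is symmetric.

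I would also record the alternative, purely spectral proof and why I do not take it as primary: the eigenvalues of $A\otimes I_n+I_n\otimes A$ are exactly the sums $\lambda_i(A)+\lambda_j(A)$ — immediate on the common eigenvectors $v_i\otimes v_j$ when $A$ is diagonalizable, and in general via a Schur factorization $A=UTU^{*}$, under which $A^{\oplus 2}$ becomes unitarily similar to an upper-triangular matrix with diagonal entries $t_{ii}+t_{jj}$ — so stability of $A$ forces $\mathrm{Re}\bigl(\lambda_i(A)+\lambda_j(A)\bigr)<0$ for all $i,j$. The main obstacle along this route is handling a non-diagonalizable $A$ cleanly (the Schur/Jordan reduction and the ordering of the induced basis), which is precisely what the $\mathcal P=P\otimes P$ construction sidesteps; I therefore expect the Lyapunov argument to be the shortest complete proof.
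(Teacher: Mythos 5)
Your proof is correct, but it takes a genuinely different route from the paper. The paper argues spectrally: it invokes Theorem 4.4.5 of \cite{horn1994topics}, by which the eigenvalues of $A^{\oplus 2}=A\otimes I_n+I_n\otimes A$ are exactly the pairwise sums $\lambda_i+\lambda_j$ of the eigenvalues of $A$ (with eigenvectors $v_i\otimes v_j$ in the diagonalizable case), so stability of $A$ immediately forces every $\mathrm{Re}(\lambda_i+\lambda_j)<0$ — essentially the argument you relegate to your ``alternative'' paragraph, except that the paper outsources the non-diagonalizable bookkeeping to the cited theorem rather than doing a Schur reduction. Your primary argument instead exhibits the explicit Lyapunov certificate $\mathcal P=P\otimes P$ and verifies $(A^{\oplus 2})^{T}\mathcal P+\mathcal P A^{\oplus 2}=M\otimes P+P\otimes M\prec 0$ via the mixed-product rule; the computation checks out, and negative definiteness of each term follows from symmetry plus the product-of-eigenvalues fact. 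What each buys: the paper's spectral route is shorter given the citation and delivers the exact spectrum of $A^{\oplus 2}$, which the paper actually uses later when sweeping $\alpha\in(0,\kappa)$ with $\kappa=-2\,\mathrm{max}(\mathrm{real}(\mathrm{eig}(A^{\oplus 2})))$; your Lyapunov route is self-contained modulo the standard Kronecker identities already listed in the paper, avoids any appeal to Jordan or Schur structure, matches the Lyapunov-function theme of the rest of the work, and produces a concrete quadratic function for the lifted system as a byproduct — though it certifies stability only, without giving the eigenvalue description.
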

\begin{proof}
By using Theorem 4.4.5 in \cite{horn1994topics}. Let the eignvalues of $A$ are $\lambda_{1}, \lambda_{2}, \dots, \lambda_{n}$ and the corresponding eignvectors are $v_1,v_2,\dots,v_n$, then the eignvalues of $A^{\oplus 2} $ are the sum of each pair of the eignvalues of $A$. So, $\sigma(A)=\{\lambda_i + \lambda_j:  i =1,2,\dots,n  \text{ and } j = 1,2,\dots n\}$ and the corresponding eignvector of each sum is $v_i \otimes v_j$.
since $A$ is a stable matrix, so the real part of each $\lambda$ is negative. Therefore the real part of the sum of any pairs is also negative. So, the real parts of all eignvalues of $A^{\oplus 2}$ are negative which implies the stability of $A^{\oplus 2}$. 
\end{proof}

To clarify the idea of this section which is the main contribution of this work, we consider a second-order LTI system. Then, the idea can be generalized to higher order systems. Let $x \in \mathbb{R}^2$, then the lifted dynamics (\ref{ lifted system d=2}) will be

\begin{equation}\label{lifted system n=2 d=2}
    \begin{aligned}
 \dot{\zeta}&= \begin{bmatrix}
2a_{11} &a_{12}& a_{12} & 0 \\ 
    a_{21} & a_{11}&a_{22}& a_{12} \\
     a_{21}& a_{11} &a_{22} &a_{12} \\
     0 & a_{21} & a_{21} & 2a_{22} 
\end{bmatrix}
\zeta+ \begin{bmatrix}
2 b_1 & 0 \\
    b_1  &b_2 \\
    b_1  &b_2 \\
     0 & 2b_2 
\end{bmatrix}
w \\
\eta &= \begin{bmatrix}
c_1^2 &c_1 c_2 & c_1 c_2 & c_2^2 
\end{bmatrix}\zeta,
\end{aligned}
\end{equation}
where, $\zeta= \begin{bmatrix} x_1^2 &x_1 x_2 & x_2 x_1 & x_2^2\end{bmatrix}^T$, $w=\begin{bmatrix} ux_1 & u x_2 \end{bmatrix}^T$, and $\eta=y^2$. The objective to get the inescapable ellipsoids which approximates the reachable set with unit peak input $u^2 \leq 1$ and the star norm that is considered as an upper bound on the $L_{\infty}$- induced norm. 

The condition $u^2 \leq 1$ imposes new inequality constraints on the lifted states $\zeta$ and lifted inputs $w$ as follows 
\begin{equation}\label{inequality constraints n=2}
    \begin{aligned}
    w_1^2&= u^2 x_1^2 \leq x_1^2 = \zeta_1\\
    w_2^2&= u^2 x_2^2 \leq x_2^2 = \zeta_4.
    \end{aligned}
\end{equation}
There also exists an equality constraint, $w_1 x_2 - w_2 x_1 = 0$, but this equality is a function of $w$ and original state $x$. So, we multiply both sides by $x_1$ and $x_2$ to get two different quadratic equality constraints in $\zeta$ and $w$ given by
\begin{equation}\label{equality w and zeta}
\begin{aligned}
     w_1 \zeta_2 -w_2 \zeta_1 &= 0 \\
     w_1 \zeta_4- w_2 \zeta_3 &=0.
\end{aligned}
\end{equation}

Suppose there exist a quadratic function $V(\zeta) = \zeta^T P \zeta $ where $P\in \mathbb{S}_{++}^{4 \time 4} \succ 0$ and $dV(\zeta)/dt \leq 0$ for all $\zeta$ and $w$ satisfying (\ref{lifted system n=2 d=2}) whenever $\zeta^T P \zeta \ge 1$ and $w, \zeta$ satisfy (\ref{inequality constraints n=2}) and (\ref{equality w and zeta}). Then, the ellipsoid $\zeta^T P \zeta \leq 1$ is an inescapable ellipsoid and contains the reachable set. Therefore, these conditions can be written as
\begin{equation}\label{conditions n=2 d=2}
  \begin{aligned}
  &\zeta^T (PA^{\oplus2}+{A^{\oplus 2}}^T P ) \zeta + \zeta^T P B^{\oplus 2} w + w^T {B^{\oplus 2}}^T P \zeta \leq 0\\
  &\forall \zeta \text{ and } w \text{ whenever, } \zeta^T P \zeta \leq 1, \text{ and }\\
  &\qquad \qquad \qquad \qquad \qquad (\ref{inequality constraints n=2}) \text{ and } (\ref{equality w and zeta}) \text{ are satisfied}.
  \end{aligned}  
\end{equation}
Using $\mathcal{S}$ procedure \cite{feron1994linear}, condition (\ref{conditions n=2 d=2}) holds if there exists $\alpha \ge 0$, $\beta_1 \ge 0$, $\beta_2 \ge 0$, $\gamma_1$ and $\gamma_2$ such that for all $\zeta$ and $w$:
\small{
\begin{equation}\label{LMI n=2 d=2}
    \begin{bmatrix}
    PA^{\oplus2}+{A^{\oplus 2}}^T P + \alpha P & P B^{\oplus 2} + \gamma_1 E_1 + \gamma_2 E_1 & z \\
    {B^{\oplus 2}}^T P+\gamma_1 E_1^T + \gamma_2 E_2^T & W &\mathbf{0}_2 \\
    z^T  &\mathbf{0}_2^T&-\alpha
    \end{bmatrix}\le 0
\end{equation}} \normalsize

\noindent where $\mathbf{0}_2 \in \mathbb{R}^2$ is the zero vector in $\mathbb{R}^2$,  $z= \begin{bmatrix} \beta_1/2 &0&0& \beta_2/2 \end{bmatrix}^T$, and the matrices $E_1$, $E_2$ and W are 
\begin{equation}\label{E_1,E_2 and W}
\begin{aligned}
    E_1= \begin{bmatrix}
    0 &-1 \\
    1 &0 \\
    0 &0 \\
    0 &0
    \end{bmatrix}, \quad 
    E_1= \begin{bmatrix}
    0 &0 \\
    0 &0 \\
    0 &-1 \\
    1 &0
    \end{bmatrix}, \quad
    W= \begin{bmatrix}
    -\beta_1 & 0 \\0 &-\beta_2
    \end{bmatrix}.
\end{aligned}
\end{equation}
The same procedure used in section \ref{star norm section} to get the inescapable ellipsoid and the star norm for the lifted system. First, the following SDP is solved for every $\alpha \in (0,\kappa)$ , where $\kappa = -2 \text{max}(\text{real} (\text{eig}(A^{\oplus 2})))$
\begin{equation}\label{Semidefinite program n=2}
    \begin{aligned}
& \underset{P}{\text{minimize}}
& & C^{\otimes 2} P^{-1} {C^{\otimes 2}}^T \\
& \text{subject to}
& & P>0, \quad (\ref{LMI n=2 d=2}).
\end{aligned}
    \end{equation}
Then, find the upper bound of the output in every inescapable ellipsoid $x^{\otimes 2} P x^{\otimes 2} \leq 1$ as
\begin{equation} \label{N_alpha n=2 d=2}
        N_{\alpha} = ||C^{\otimes 2} P ^{-1} {C^{\otimes 2}} || ^{\frac{1}{2}}.
    \end{equation}
The star norm for the lifted system is the smallest of these upper bounds. 
\begin{equation}\label{star norm n=2 d=2}
        H_2^{\star} =\underset{\alpha \in (0,k)}{\text{inf}} N_{\alpha}
    \end{equation}
Finally, the star norm of the original system can be calculated as the square root of the star norm obtained for the higher order system ($d=2$). 

The lifting procedure can be done for higher order systems ($n>2$) using the same procedure, but the number of equality and inequality constraints imposed in the $\mathcal{S}$ procedure will increase. In this case, we have $n$ inequality constraints $\{w_i^2 \leq \zeta_{i(n+1)-n}: i=1\dots n\}$. The number of equality constraints is $n(n-1)/2$:
\begin{equation}
\begin{matrix}
    w_1x_2-w_2x_1=0, \dots, w_1x_n-w_n x_1=0,\\
    w_2x_3-w_3x_2=0, \dots, w_2x_n-w_nx_2=0,\\
    \dots w_{n-1}x_n-w_nx_{n-1}=0
\end{matrix}
\end{equation}
As in (\ref{equality w and zeta}) each of both sides of these equality constraints is multiplied by $x_1,x_2,\dots, x_n$ to come up with $n^2(n-1)/2$ quadratic equality constraints in $w$ and $\zeta$ that can be imposed easily in the $\mathcal{S}$ procedure. 

\section{Numerical experiments}
In this section, we consider three systems: high damping, low damping and stiff systems. To show the effect of lifting the dynamical system and using homogeneous Lyapunov function in approximating the reachable sets and computing upper bounds on the $\ell_1$ norm. 
\subsection{System with high damping}
Consider the following LTI system 
\begin{equation}\label{high damping system}
    \dot{x}=\begin{bmatrix}
    0& 1\\-4 &-4
    \end{bmatrix}x+\begin{bmatrix}
    0\\1
    \end{bmatrix}u, \qquad y = \begin{bmatrix}
    1&1
    \end{bmatrix}x
\end{equation}
The exact value of the $\ell_1$ norm is obtained using Mathematica by computing $\int_0^{\infty} |C e^{At} B| dt=\mathbf{0.3177}$. By using method introduced in section \ref{star norm section} without lifting ($d=1$), the computed star norm is $\mathbf{0.3536}$. After lifting the system to $d=2$, the star norm computed using the method in section \ref{star norm d=2} is $\mathbf{0.3368}$ which is a better approximation for the upper bound of the $\ell_1$ norm. Additionally, in Fig.  \ref{fig:high damping system sets}, the blue and black ellipsoids represents the approximate of the unit peak input reachable sets using ($d=1$) and ($d=2$) respectively. Which shows that using the higher order system produces better approximation for the reachable set. 
The red line in Fig. \ref{fig:high damping system sets} represents an attempt at computing a worst case trajectory, the trajectory generated using the control input $u$ that maximize the gradient of the Lyapunov function ($dV/dt$) at every time step. this control input can be written as 
\begin{equation}\label{worst case control}
\begin{aligned}
    u &= \underset{u^2 \leq 1}{\text{argmax}} \quad x^T(PA+A^T P)x+ 2 x^T P B u \\
    &= \text{sign} (x^T P B)
\end{aligned}
\end{equation}
\begin{figure}[thpb]
    \centering
    \includegraphics[scale=0.5]{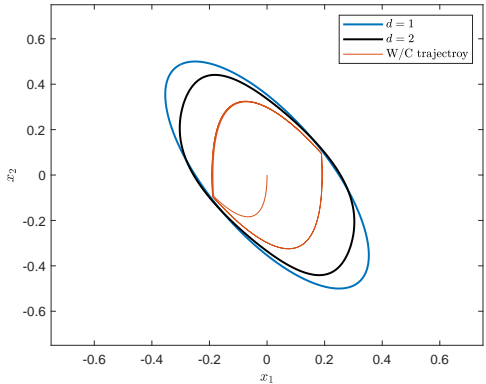}
    \caption{Approximate unit peak peak input reachable sets and worst case trajectory for the high damping system (\ref{high damping system})  . }
    \label{fig:high damping system sets}
\end{figure}

The peak output of the worst case trajectory represents the lower bound for the $\ell_1$ norm  Fig. \ref{fig:high damping system trajectory} shows that value of this lower bound is $\mathbf{0.3097}$
\begin{figure}[thpb]
    \centering
    \includegraphics[scale=0.5]{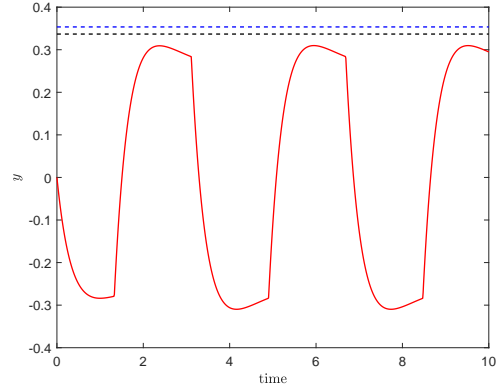}
    \caption{Red curve represents the "Worst case" trajectory and the blue and  black dotted lines are the star norm for $d=1$ and $d=2$ respectively for  the high damping system (\ref{high damping system}) }
    \label{fig:high damping system trajectory}
\end{figure}
\subsection{System with low damping}
 Consider the following low damping system
\begin{equation}\label{low damping system}
    \dot{x}=\begin{bmatrix}
    0& 1\\-0.5 &-0.5
    \end{bmatrix}x+\begin{bmatrix}
    0\\1
    \end{bmatrix}u, \qquad y = \begin{bmatrix}
    1&1
    \end{bmatrix}x
\end{equation}
The exact value of the $\ell_1$ norm is by calculating the integral $\int_0^{\infty} |C e^{At} B| dt$ is $\mathbf{4.3069}$.
The computations shows that the value of the star norm at ($d=2$), $\mathbf{4.5533}$, is less than the value of star norm obtained at ($d=1$), $\mathbf{4.63}$. In addition, Fig. \ref{fig:low damping system sets} shows that the approximation of the unit peak input reachable set is less conservative when using the lifted system ($d=2$). 
\begin{figure}[thpb]
    \centering
    \includegraphics[scale=0.5]{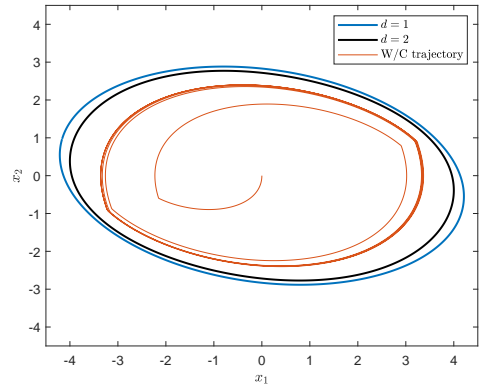}
    \caption{Approximate unit peak peak input reachable sets and worst case trajectory for low damping system (\ref{low damping system}) . }
    \label{fig:low damping system sets}
\end{figure}

\subsection{stiff system}
Consider the following system 
\begin{equation}\label{stiff system}
    \dot{x}=\begin{bmatrix}
    -1& 0\\0 &-100
    \end{bmatrix}x+\begin{bmatrix}
    1\\100
    \end{bmatrix}u, \qquad y = \begin{bmatrix}
    1&-2
    \end{bmatrix}x
\end{equation}
The exact value of the $\ell_1$ norm is $\mathbf{3.0412}$. The star norm computed when ($d=1$) is $\mathbf{ 10.4600}$ which is very conservative upper bound for the $\ell_1$ norm. However, the star norm computed using the lifted system is $\mathbf{5.7680}$ which shows the effectiveness of lifting the system to obtain better bounds. Also, Fig. \ref{fig:stiff system sets} shows the significant effect of lifting in reducing the conservatism in reachable set approximations for system (\ref{stiff system}). 
\begin{figure}[thpb]
    \centering
    \includegraphics[scale=0.5]{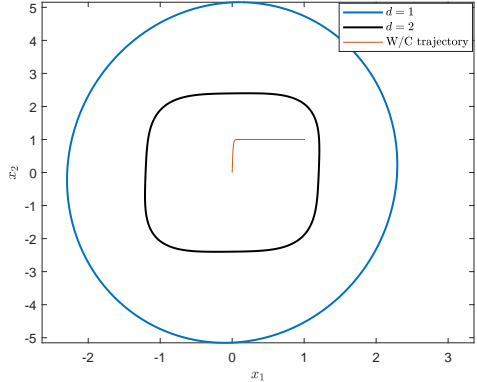}
    \caption{Approximate unit peak peak input reachable sets and worst case trajectory for stiff system (\ref{stiff system})  . }
    \label{fig:stiff system sets}
\end{figure}

\section{Alternative method to approximate $\ell_1$ norm}
In this section, we introduce another method to get better and more accurate approximation for the $\ell_1$ norm which is $\int_0^{\infty} |C e^{At} B |dt$. This integral is divided in two parts as:
\begin{equation}\label{integral split}
    \int_0^{\infty} |C e^{At} B |dt=\int_0^{T_0} |C e^{At} B |dt + \int_{T_{0}}^{\infty}\left|C e^{A t} B\right| d t
\end{equation}
where $T_0 >0$ is known, so the first term can be computed exactly. The second term is the $\ell_1$ norm of the following system: 
\begin{equation}\label{equivalent system}
    \begin{aligned}
\dot{z} &=A z+e^{A T_{0}} B u \\
y &=C z
\end{aligned}
\end{equation}
An upper bound on the $\ell_1$ norm of system (\ref{equivalent system}) can be approximated as the star norm where the system can be lifted to a higher order to obtain better approximation. 
Table \ref{table1} shows the upper bound of the $\ell_1$ norm of the low damping system (\ref{low damping system}) using different values of $T_0$. At each $T_0$, the first term in (\ref{integral split}) is computed and the second term is approximated at $d=1$ and $d=2$ for the system (\ref{equivalent system}). 

\begin{table}[h]
\caption{$\ell_1$ norm approximation for system (\ref{low damping system}) [exact value is $4.3069$]} 
\label{table1}
\begin{center}
\begin{tabular}{|c|c|c|}
\hline
$T_0 \text(Sec)$ & Upper bound ($d=1$)& Upper bound ($d=2$)\\
\hline
2 & 4.5683 &4.5304\\
\hline
5 & 4.4078 &4.3981\\
\hline
10 & 4.3376 &4.3332\\
\hline
20 & 4.3096 &4.3091\\
\hline
\end{tabular}
\end{center}
\end{table}
This method provides more accurate approximations even for stiff system (\ref{stiff system}), The upper bound obtained by using $T_0=0.05 \text{sec}$ and $d=2$ is $\mathbf{3.0424}$ which is very close to the exact value $\mathbf{3.0412}$
\section{CONCLUSIONS}
This work demonstrates how the  polynomial homogeneous Lyapunov function can be used to obtain better approximations for the LTI system's star norm which is an upper bound for the system's $\ell_1$ norm. We also demonstrated that the inescapable ellipsoids, approximations for the unit peak input reachable sets,  generated using higher order Lyapunov functions is more accurate and less conservative that those produced by using quadratic Lyapunov functions. We tested that for three types of systems: systems with high damping, systems with low damping and stiff systems, to show the improvement of the reachable sets approximation. We only considered lifting the LTI system $d=2$. For $d>2$, the main difficulty is to transfer the inequality $u^2 \leq 1$ in the original space to quadratic inequalities in the higher order space as in (\ref{inequality constraints n=2}) for $d=2$. As a future work, we are working to develop new techniques to deal with this difficulty. Also, there are future opportunities to use homogeneous Lyapunov functions in robustness analysis and design better-performing controllers.
\addtolength{\textheight}{-12cm}   

\bibliographystyle{IEEEtran}
\bibliography{ECC2022.bib}

\end{document}